\newtheorem{theorem}{Theorem}[section]
\newtheorem{lemma}[theorem]{Lemma}
\newcommand{\Rt}{\mathbb{R}^3}
\title{Extreme Bowen-York initial data}
\author{Sergio Dain$^{1,2}$ and Mar\'\i a E. Gabach Cl\'ement$^1$\\
\\
$^1$Facultad de Matem\'atica, Astronom\'{i}a y F\'{i}sica, \\
Universidad Nacional de C\'ordoba,\\
 Ciudad Universitaria, (5000) C\'ordoba, Argentina.  \\
\\
$^{2}$Max Planck Institute for Gravitational Physics,\\
  (Albert Einstein Institute), Am M\"uhlenberg 1,\\
  D-14476 Potsdam Germany. }
\begin{document}

\maketitle

\abstract{The Bowen-York family of spinning black hole initial data
  depends essentially on one, positive, free parameter.  The extreme
  limit corresponds to making this parameter equal to zero. This choice
  represents a singular limit for the constraint equations. We prove
  that in this limit a new solution of the constraint equations is
  obtained. These initial data have similar properties to the extreme
  Kerr and Reissner-N\"ordstrom black hole initial data. In
  particular, in this limit one of the asymptotic ends changes from
  asymptotically flat to cylindrical.  The existence proof is
  constructive, we actually show that a sequence of Bowen-York data
  converges to the extreme solution.}

\section{Introduction}
\label{sec:introduction}

%% Extreme limit: stationary black holes

The Kerr-Newman black hole depends on three parameters, $m$, $q$ and
$J$, the mass, the electric charge and the angular momentum of the
spacetime respectively.  They satisfy the following well known
inequality
\begin{equation}
  \label{eq:39}
  m^2 \geq q^2 + \frac{J^2}{m^2}.
\end{equation}
This inequality can be written in the following
form in which the mass appears only in the left hand side of the equation
and on the right hand side we have all the `charges'
\begin{equation}
  \label{eq:37}
  m^2 \geq \frac {q^2+ \sqrt{q^4+ 4J ^2}}{2}.
\end{equation}
The extreme Kerr-Newman black hole is defined by the equality in
(\ref{eq:37})
\begin{equation}
  \label{eq:41}
  m^2 = \frac {q^2+ \sqrt{q^4+ 4J ^2}}{2}.
\end{equation}
For fixed values of $q$ and $J$, we can interpret the extreme black
hole as the black hole with the minimum mass. In other words, the
extreme black hole has the maximum amount of charge and angular
momentum per mass unit allowed for given values of $q$ and $J$. This
variational interpretation of extreme black holes generalizes to
non-stationary, axially symmetric, black holes (\cite{Dain06c},
\cite{Dain:2007pk}, \cite{Dain05e}). It is convenient to define a
parameter $\mu$ which measures how far a black hole is with respect to
the extreme case. In the stationary case, assuming that $m$, $q$ and
$J$ satisfy inequality (\ref{eq:37}), $\mu$ is given by
\begin{equation}
  \label{eq:40}
  \mu=\sqrt{ m^2 - \frac {q^2+ \sqrt{q^4+ 4J ^2}}{2}}.
\end{equation}
Note that $\mu$ has unit of mass.  The extreme limit corresponds to
$\mu = 0$. For Schwarzschild solution we have $\mu=m$.

In the extreme limit the global structure of the spacetime changes
(see \cite{Carter73}).  Particularly relevant for the study of black
holes as an initial value problem is the change in the structure of
Cauchy surfaces, and hence initial data set, in this limit.  The
slices $t=constant$ in Boyer-Lindquist coordinates represent Cauchy
surfaces for the Kerr-Newman black hole. For $\mu>0$ these slices have
two isometrical asymptotically flat ends. In the extreme limit $\mu=0$
one of the ends changes from asymptotically flat to cylindrical. Also,
for $\mu>0$ the Cauchy surfaces contain an apparent horizon (in this
case, due to the symmetry, it is also a minimal surface). In the
extreme case they do not contain any apparent horizons or minimal
surfaces.

%% Extreme limit Bowen-York
We can characterize a black hole spacetime by an initial data
set. Then, it is possible to define an analog to the extreme limit
discussed above for more general (in particular, non-stationary) black
holes families. In \cite{Dain:2008ck} the extreme limit for the
Bowen-York family of spinning black holes initial data set was
defined.  The Bowen-York initial data set \cite{Bowen80} describes
non-stationary, axisymmetric, black holes with angular
momentum. Having fixed the angular momentum, the Bowen-York family
depends on one parameter, which is the analog of the $\mu$ parameter
defined in \eqref{eq:40}. As for the Kerr-Newman black hole, the
extreme limit in this case also corresponds to $\mu \to 0$. The
problem is that these data are not given explicitly. They are
prescribed as solutions of a non-linear elliptic equation
(essentially, the Hamiltonian constraint) with appropriate boundary
conditions. For the case $\mu >0$ it is well known that this equation
has a unique solution. However, the value $\mu=0$ represents a
singular limit for this equation. In \cite{Dain:2008ck} this limit was
explored numerically. The numerical calculations indicate that in the
limit a new solution is obtained (see also \cite{Lovelace:2008tw}).
The purpose of this article is to prove this. Namely, we will prove
that the sequence $\mu \to 0$ of Bowen-York spinning black hole data
converges to a limit solution. We call this new solution of the
constraint equations the extreme Bowen-York data.  We also prove that
the solution (as it was showed numerically in \cite{Dain:2008ck}) has
a similar behavior to the extreme Kerr-Newman initial data discussed
above.

%% Application for numerical relativity
The Bowen-York spinning black hole initial data has been extensively
used in numerical relativity (see the review article
\cite{Cook00}). The extreme Bowen-York data constructed here represent
the data with the maximum amount of angular momentum per mass unit in
this family and hence they have applications in astrophysical
scenarios in which highly spinning black holes are relevant (see the
discussion in \cite{Dain:2008ck}, \cite{Lovelace:2008tw} and
references therein).

As a final comment, we mention that asymptotically flat Riemannian
manifolds have been extensively studied in General Relativity in
connection with the constraint equations (see the review article
\cite{Bartnik04b}). On the other hand very little is known about
manifolds with cylindrical ends which appear naturally in
extreme black holes. The solution presented here represents a
non-stationary and non-trivial example of such manifolds.

The article is organized as follows. In section \ref{sec:main-result}
we present our main result given by theorem \ref{t:1} and we discuss
its implications. The proof of this theorem is split in section
\ref{sec:monot-sequ}, \ref{sec:bounds-sequence} and
\ref{sec:convergence-sequence}. Possible generalizations and further
studies are discussed in section \ref{sec:final-comments}. Finally, in
the Appendix we give the explicit expression of a lower bound for the
solution that can be useful in numerical calculations.

\section{Main result}
\label{sec:main-result}
%% Bowen-York

Let us review the Bowen-York spinning black hole initial
data \cite{Bowen80} with `puncture' boundary conditions
\cite{Brandt97b}. The 3-dimensional manifold is given by $\Rt\setminus
\{0\}$. On  $\Rt\setminus
\{0\}$ the metric $h_{ij}$ and second fundamental form $K_{ij}$ are given by
\begin{equation}
  \label{eq:3}
  h_{ij}=\Phi^4\delta_{ij}, \quad K_{ij}=\Phi^{-2}\sigma_{ij},
\end{equation}
where $\delta_{ij}$ is the flat metric and the tensor $\sigma_{ij}$ is
given by
\begin{equation}
  \label{eq:sigma}
  \sigma_{ij}=\frac{6}{r^3}n_{(i}  \epsilon_{j)kl} J^k n^l,
\end{equation}
where $r$ is the spherical radius, $n^i$ the corresponding radial
unit normal vector, $\epsilon_{ijk}$ the flat volume
element  and $J_k$ an arbitrary  constant vector. In this equation the
indices are moved with the flat metric $\delta_{ij}$.

The conformal factor $\Phi$ satisfies the following non-linear
elliptic equation in $\Rt\setminus \{0\}$
\begin{equation}
\label{eq:BYPhi}
\Delta{\Phi}=  F(x,\Phi),
\end{equation}
where
\begin{equation}
  \label{eq:F}
  F(x,\Phi)=-\frac{9J^2\sin^2\theta}{4r^6\Phi^7},
\end{equation}
and $J^2=J_iJ_j\delta^{ij}$, $\Delta$ is the flat Laplacian
and $x$ denotes spherical coordinates $(r,\theta)$.

Boundary conditions for black holes are prescribed as follows. For a
given parameter $\mu >0$ define
the function  $u_\mu$ on $\Rt$, by
\begin{equation}
  \label{eq:Phimu}
  \Phi_\mu:=1+\frac{\mu}{2r}+u_\mu.
\end{equation}
Inserting this definition in equation (\ref{eq:BYPhi}) we obtain the
following equation for  $u_\mu$
\begin{equation}
\label{ecby}
\Delta u_{\mu}=F(x,\Phi_\mu),
\end{equation}
where
\begin{equation}
  \label{eq:5}
  F(x,\Phi_\mu)= - \frac{9J^2\sin^2\theta}{4r^6(1+\frac{\mu}{2r}+u_{\mu})^{7}}.
\end{equation}
Then, equation (\ref{ecby}) is solved in $\Rt$ subject to the
asymptotic behavior
\begin{equation}
  \label{condecby}
  u_{\mu}\rightarrow 0 \quad \text{ as } r\rightarrow\infty.
\end{equation}
For every $\mu>0$ there exists a unique solution of (\ref{ecby}) such
that it satisfies (\ref{condecby}). A proof of this was given in
\cite{Brandt97b} based on \cite{Cantor79}. It is also possible to
prove this result using a suitable adapted version of the sub and
supersolution theorem presented in \cite{Choquet99} or using a
compactification of $\Rt$ like the existence theorems in \cite{Beig94}
\cite{Dain99}.

Note that equation (\ref{ecby}) depends, in principle, on two
parameters, $J$ and $\mu$. There exists however a scale invariance
for this equation (see \cite{Dain:2008ck}), and hence the solution
depends non trivially only on one parameter. We chose to fix $J$ and vary $\mu$.

In the rest of the article we will denote by $u_\mu$
the unique solution of (\ref{ecby}), with boundary condition
(\ref{condecby}) for any given $\mu>0$. We have that $u_\mu \geq 0$ and $u_\mu\in
C^{2,\alpha}(\Rt)$, where $C^{k,\alpha}(\Rt)$ denotes H\"older
spaces (see, for example, \cite{Gilbarg} for definition and properties
of these functional spaces).

The total angular momentum of the data is given by $J$ and the total
mass $m$  is given by
\begin{equation}
  \label{eq:42}
m= \mu + \frac{1}{4\pi} \int_{\Rt}
\frac{9J^2\sin^2\theta}{4r^6(1+\frac{\mu}{2 r}+u_{\mu})^{7}} \, dx.
\end{equation}
Note that the mass can not be a priori explicitly calculated as a function of
$\mu$ and $J$ since it involves the solution $u_\mu$.

As we said in the introduction, we are interested in studying the limit
\begin{equation}
  \label{eq:32}
\lim_{\mu \to 0} u_\mu .
\end{equation}
The corresponding equation becomes
\begin{equation}
\label{ecbye}
\Delta u_0=-\frac{9J^2\sin^2\theta}{4r^6(1+u_0)^{7}}.
\end{equation}
We remark that when $\mu>0$, the right hand side of (\ref{ecby}) is bounded
in $\Rt$ (this is of course related with the fact the the solution
$u_\mu$ is regular at the origin for $\mu >0$). Whereas in the extreme
case, $\mu=0$, it becomes singular at the origin, and hence we can not
expect the solution $u_0$ to be regular at the origin.

The following theorem constitutes the main result of the present
article. To formulate the theorem we will use weighted Sobolev spaces,
 denoted by $H'^{2,\delta}$, defined in \cite{Bartnik86} (see equation
(\ref{eq:38}) in section \ref{sec:convergence-sequence}).
\begin{theorem}
\label{t:1}

(i) There exists a solution $u_0$ of equation (\ref{ecbye}) in
$\Rt\setminus \{0\}$ such that  $u_0\in C^\infty (\Rt\setminus
\{0\})$ and $u_0$ satisfies the following bounds
\begin{equation}
  \label{eq:1}
 u^-_0 \leq u_0 \leq u^+_0,
\end{equation}
where the functions $u^+_0$ and $u^-_0$ are explicitly given by
\begin{equation}
  \label{eq:supsolm}
  u^+_0=\sqrt{1+\frac{|q|}{r}}-1, \quad |q|=\sqrt{3|J|},
\end{equation}
and
\begin{equation}
\label{eq:subsolm}
u^-_0 =Y_{00}(\theta)I_1(r) -\frac{Y_{20}(\theta)}{5^{3/2}} I_2(r).
\end{equation}
Here $Y_{00}$ and $Y_{20}$ are spherical harmonics (see equation
(\ref{eq:sphar}))
and the radial functions $I_1(r)$ and $I_2(r)$ are given explicitly in
the Appendix (equations (\ref{eq:I1mu1}) and (\ref{eq:I2mu2})).
\par
(ii) In addition, we have that  $u_0\in H'^{2,\delta}$ for
$-1<\delta<-1/2$ and $u_0$ is the
limit of the sequence
\begin{equation}
  \label{eq:4}
  \lim_{\mu \to 0}  u_\mu=u_0,
\end{equation}
in the norm $H'^{2,\delta}$.
\end{theorem}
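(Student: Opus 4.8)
The plan is to realize $u_0$ as the monotone limit of the Bowen-York sequence $u_\mu$ and then extract enough compactness and regularity to pass to the limit in the equation.

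First I would establish monotonicity: I claim $u_\mu$ is monotonically increasing as $\mu$ decreases to $0$. The heuristic is that decreasing $\mu$ increases the right-hand side $|F(x,\Phi_\mu)|$ (since $F$ is negative and $\Phi_\mu$ decreases with $\mu$), which should force $u_\mu$ up. To make this rigorous I would take $0<\mu_1<\mu_2$, set $w=u_{\mu_1}-u_{\mu_2}$, and show $w\geq 0$ via a maximum principle argument: $\Delta w = F(x,\Phi_{\mu_1})-F(x,\Phi_{\mu_2})$, and writing $\Phi_{\mu_1}-\Phi_{\mu_2} = \frac{\mu_1-\mu_2}{2r}+w$, one separates the sign of $F$'s variation into a manifestly good part ($(\mu_1-\mu_2)/2r < 0$ makes $\Phi$ smaller, hence $|F|$ larger, hence $\Delta w$ more negative) and a part proportional to $w$ with a favorable sign, so a comparison/iteration argument (or directly the weak maximum principle applied after noting $w\to 0$ at infinity) gives $w\geq 0$. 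This is the content I expect to be developed in section \ref{sec:monot-sequ}.

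Next I would derive uniform (in $\mu$) bounds. The upper bound is the $\mu$-independent supersolution $u^+_0=\sqrt{1+|q|/r}-1$ with $|q|=\sqrt{3|J|}$: one checks by direct computation that $\Phi^+ := 1+u_0^+$ satisfies $\Delta\Phi^+ \geq F(x,\Phi^+)$ in the appropriate sense (the choice $|q|=\sqrt{3|J|}$ is exactly what makes the Reissner–Nordström-type profile dominate the spin term, using $\sin^2\theta\leq 1$), and since $u^+_0$ dominates the initial members of the sequence, the comparison principle propagates $u_\mu\leq u^+_0$ for all $\mu$. The lower bound $u^-_0$ is obtained as an explicit subsolution built from the $Y_{00}$ and $Y_{20}$ projections of the source (the angular dependence $\sin^2\theta$ has only $\ell=0$ and $\ell=2$ content), solving the corresponding radial ODEs with the conformal factor bounded below by $1$; this gives the formula \eqref{eq:subsolm} with $I_1,I_2$ as in the Appendix. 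Together, $u^-_0\leq u_\mu\leq u^+_0$ uniformly, which also pins down the decay $u_\mu = O(1/r)$ at infinity and hence the weighted-Sobolev membership for $-1<\delta<-1/2$.

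Finally I would pass to the limit. On any compact set $K\subset\Rt\setminus\{0\}$, the upper and lower bounds keep $\Phi_\mu$ bounded away from $0$ and $\infty$, so the right-hand side $F(x,\Phi_\mu)$ is uniformly bounded in $C^\alpha(K)$; interior elliptic ($C^{2,\alpha}$) estimates then give uniform local $C^{2,\alpha}$ bounds, and by Arzelà–Ascoli plus the monotonicity the full sequence converges in $C^2_{loc}(\Rt\setminus\{0\})$ to a limit $u_0$ satisfying \eqref{ecbye} classically; bootstrapping gives $u_0\in C^\infty(\Rt\setminus\{0\})$. For part (ii), the uniform bound $u_\mu\leq u^+_0$ and standard weighted-Sobolev elliptic estimates on $\Rt$ (à la Bartnik, applied to the difference $u_\mu-u_0$ or directly) give a uniform $H'^{2,\delta}$ bound; weak compactness plus the pointwise convergence identifies the weak limit with $u_0$, and an argument upgrading weak to strong convergence in $H'^{2,\delta}$ (using the equation to control the second derivatives together with monotone convergence of the source, via dominated convergence with dominating function coming from $u^\pm_0$) finishes \eqref{eq:4}.

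The main obstacle I anticipate is the behavior at the origin: the source $-9J^2\sin^2\theta/(4r^6(1+u_0)^7)$ is genuinely singular at $r=0$ in the extreme case, so one must confirm that the uniform bounds $u^-_0\leq u_\mu\leq u^+_0$ survive the limit near $0$ (they do, since $u^+_0\sim\sqrt{|q|/r}$ there, making $(1+u_0)^7\sim (|q|/r)^{7/2}$ which beats $r^{-6}$—giving an integrable, indeed vanishing-at-origin, source) and that this is compatible with the cylindrical-end interpretation. Getting the weighted spaces and the value of $\delta$ to match this borderline decay, and making the convergence in \eqref{eq:4} strong rather than merely weak, is the delicate technical point.
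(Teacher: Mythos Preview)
Your overall architecture (monotonicity $\Rightarrow$ upper bound $\Rightarrow$ lower bound $\Rightarrow$ local convergence $\Rightarrow$ weighted convergence) matches the paper, and the monotonicity and supersolution steps are essentially correct. The genuine gap is in your construction of the lower bound $u^-_0$. You write that $u^-_0$ comes from ``solving the corresponding radial ODEs with the conformal factor bounded below by $1$.'' This fails in two ways. First, replacing $\Phi$ by the \emph{smaller} value $1$ in $F(x,\Phi)$ makes $F$ \emph{more} negative (since $F$ is non-decreasing in $\Phi$), so the solution of $\Delta v=F(x,1)$, if it existed, would lie \emph{above} $u_\mu$, not below. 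Second, the source $F(x,1)=-\tfrac{9J^2\sin^2\theta}{4r^6}$ is $O(r^{-6})$ at the origin and is not locally integrable in $\mathbb{R}^3$, so the Poisson problem does not even make sense there; you would get neither the formula \eqref{eq:subsolm} nor the $r^{-1/2}$ behavior.

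The paper's construction is the opposite: one plugs the already-established \emph{upper} bound $\Phi^+_\mu$ (the Reissner--N\"ordstrom profile) into the source and solves the linear problem $\Delta u^-_\mu=F(x,\Phi^+_\mu)$. Because $\Phi^+_\mu\geq\Phi_\mu$ one has $F(x,\Phi^+_\mu)\geq F(x,\Phi_\mu)=\Delta u_\mu$, whence $u^-_\mu\leq u_\mu$ by the maximum principle. Crucially, near the origin $F(x,\Phi^+_0)\sim r^{-6}(|q|/r)^{-7/2}=O(r^{-5/2})$, which \emph{is} integrable and yields exactly the $I_1,I_2$ of the Appendix with the correct $r^{-1/2}$ blow-up. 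This lower bound is not cosmetic: in part (ii) the paper uses $u^-_0=O(r^{-1/2})$ to bound $H(\Phi_{\mu_1},\Phi_{\mu_2})\leq C(1+u^-_0)^{-8}/r^6=O(r^{-2})$ near $0$, which is precisely what makes $\sup|Hr^2|<\infty$ and $\|H/r\|_{L'^{2,\delta-2}}<\infty$, and hence what drives the direct Cauchy argument for strong $H'^{2,\delta}$ convergence. Your sketch of part (ii) via weak compactness plus an unspecified upgrade would also need this bound, so fixing the lower-bound construction is essential for the whole proof.
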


The bounds (\ref{eq:1}) obtained in part (i) of theorem \ref{t:1} imply
\begin{equation}
  \label{eq:2}
  u_0 = O(r^{-1}) \text{ as } r \to \infty, \quad u_0=O(r^{-1/2})\text{
    as } r \to 0.
\end{equation}
These bounds show that the limit solution $u_0$ behaves different near
the origin from the sequence's members $u_\mu$.  This behavior confirms the
numerical calculations presented in \cite{Dain:2008ck} and
\cite{Lovelace:2008tw}.  This is also related to the change of one of
the ends from asymptotically flat to cylindrical in the extreme
limit. To see this, we calculate the area of the 2-surfaces
$r=constant$ with respect to the physical metric $h_{ij}$ defined in
(\ref{eq:3}). The area $A$ is given by
\begin{equation}
  \label{eq:22}
  A_\mu(r) = 2\pi r^2 \int_0^\pi \Phi_\mu^4 \sin \theta d\theta.
\end{equation}
It is well known that for $\mu>0$ the surface $r=\mu/2$ is a minimal
surface. Also, for $\mu>0$ we have
\begin{equation}
  \label{eq:23}
\lim_{r\to \infty}  A_\mu(r)=\lim_{r\to 0}  A_\mu(r)=\infty,
\end{equation}
which reflects the fact that the data has two asymptotically flat ends. Moreover,
these asymptotic regions are isometrical and are connected by the
minimal surface at $r=\mu/2$.
The situation changes in the extreme limit. Using the bounds
(\ref{eq:1}) we can obtain the following bounds for the area in this limit
\begin{equation}
  \label{eq:24}
  0< 2.37\pi |J| \leq  A_0(0)\leq   12\pi |J|.
\end{equation}
We see that the point $r=0$ has finite, non-zero area. This shows that
$r=0$ is not an asymptotically flat end. It is a cylindrical end
similar to the one present in extreme Kerr and extreme Reissner N\"ordstrom.
On the other hand, the behavior as $r\to \infty$ is identical in both the
non-extreme and the extreme cases. That is, this end is always
asymptotically flat.

Note that in part (i) of theorem \ref{t:1} nothing is said about the
behavior of the derivatives of $u_0$ near the origin and the fall off
near infinity. The behavior of the derivatives of $u_0$ in these
regions is analyzed in part (ii) with the weighted Sobolev spaces. In
particular, these spaces provide a norm for the convergence of the
sequence and its derivatives in $\Rt$.

Finally, let us mention three important points which we were unable to
analyze at the moment. The first one is uniqueness of the solution
$u_0$.  We have not proven that this solution is unique in
$H'^{2,\delta}$ or in other suitable functional space.  The second
point is related with the behavior of the total mass in the sequence
$u_\mu$. The numerical calculations show that the mass decrease as
$\mu \to 0$ (see\cite{Dain:2008ck} \cite{Dain02c}). This is of course
the main reason why we call this solution the extreme Bowen-York
data. However, we did not prove this analytically.  Finally the third
point is concerned with the existence of minimal surfaces and
horizons. We believe that the extreme solution does not have any
minimal surface or apparent horizon (in analogy with the extreme
Kerr-Newman black hole). This is also indicated in numerical
calculations. But we were unable to show this.

The proof of theorem \ref{t:1} falls naturally into three parts
presented in section \ref{sec:monot-sequ}, \ref{sec:bounds-sequence}
and \ref{sec:convergence-sequence}. The plan of the proof is presented
below.
\begin{proof}

  We first prove that the sequence $u_\mu$ is pointwise monotonically
  increasing as $\mu$ decrease. This is proved in lemma \ref{l:monu}.
  Then, we show that there exists a function $u_0^+$, independent of
  $\mu$, which is an upper bound to this sequence for all $\mu$. See
  theorem \ref{l:rn}. This theorem constitutes the most important part
  of the proof.  From this upper bound we construct a lower bound
  $u_0^-$ in lemma \ref{l:rnsub}.  Combining these lemmas and using
  standard elliptic estimates for the Laplacian on open balls which do
  not contain the origin we prove that the limit (\ref{eq:4}) exists
  and $u_0$ is smooth outside the origin. See lemma \ref{l:icon}.
  This proves the part (i) of the theorem.  Finally, part (ii) is
  proved in lemma \ref{l:ws}.
\end{proof}
\section{Monotonicity}
\label{sec:monot-sequ}
The function $F(x,\Phi)$ defined by (\ref{eq:F}) is
non-decreasing in $\Phi$.  This fact, together with the maximum
principle for the Laplace operator, will allow us to prove the
monotonicity of the sequence $u_\mu$ with respect to the parameter $\mu$.

The non-decreasing property of $F$ is conveniently written in the
following way. Let $\Phi_1$, $\Phi_2$ be positive functions such that
$\Phi_1\geq\Phi_2$, then we have
\begin{equation}
  \label{eq:Fdif}
F(x,\Phi_1)-F(x,\Phi_2) = (\Phi_1-\Phi_2) H(\Phi_2,\Phi_1)\geq0,
\end{equation}
 where we have defined the function $H(\Phi_2,\Phi_1)=H(\Phi_1,\Phi_2)$ as
\begin{equation}
\label{eq:H}
H(\Phi_2,\Phi_1)=\frac{9J^2\sin^2\theta}{4r^6}\sum_{i=0}^{6}
\Phi_{1}^{i-7}\Phi_{2}^{-1-i}\geq0,
\end{equation}
and we have used the following elementary identity  for real numbers $a$ and $b$
\begin{equation}
  \frac{1}{a^p}-\frac{1}{b^p}=(b-a)
  \sum_{i=0}^{p-1}a^{i-p}b^{-1-i}.
\end{equation}
In our case the functions $\Phi$ are given by (\ref{eq:Phimu}) with
$\mu\geq 0$, and since $u_\mu\geq 0$ for $\mu >0$, from (\ref{eq:Phimu}) we
obtain an  upper bound for $H$
\begin{equation}
  \label{eq:Hupmu}
  |H(\Phi_{\mu_2} , \Phi_{\mu_1})|\leq \frac{9J^2
    r^2\sin^2\theta}{4}\sum_{i=0}^{6}\left(r+\frac{\mu_1}{2}\right)^{i-7}
  \left(r+\frac{\mu_2}{2}\right)^{-1-i},
\end{equation}
which shows that $H$ is bounded in $\Rt$ if $\mu_1, \mu_2
>0$. Taking $\mu_1=\mu_2=0$, from (\ref{eq:Hupmu}) we obtain the
following bound which is independent of $\mu$
\begin{equation}
  \label{eq:13}
  |H(\Phi_{\mu_2} , \Phi_{\mu_1})|\leq \frac{63J^2\sin^2\theta}{4r^6}.
\end{equation}
Note that this bound diverges at the origin.

The main result of this section is summarized in the following lemma.
\begin{lemma}
\label{l:monu}
 Assume $\mu_1\geq \mu_2>0$ then we have $u_{\mu_1}(x)\leq u_{\mu_2}(x)$ in
 $\Rt$.
\end{lemma}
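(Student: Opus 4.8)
The plan is to set $w := u_{\mu_1} - u_{\mu_2}$ and show $w \leq 0$ on $\Rt$ by a maximum‑principle argument. First I would subtract the two copies of equation (\ref{ecby}) to get
\begin{equation*}
  \Delta w = F(x,\Phi_{\mu_1}) - F(x,\Phi_{\mu_2}).
\end{equation*}
The right‑hand side is not yet in a convenient sign‑definite form, so the next step is to rewrite it using the elementary factorization (\ref{eq:Fdif})--(\ref{eq:H}). Since $\Phi_{\mu_i} = 1 + \mu_i/(2r) + u_{\mu_i}$, we have $\Phi_{\mu_1} - \Phi_{\mu_2} = (\mu_1 - \mu_2)/(2r) + w$, so
\begin{equation*}
  \Delta w = \bigl(\Phi_{\mu_1} - \Phi_{\mu_2}\bigr)\, H(\Phi_{\mu_1},\Phi_{\mu_2})
           = \left(\frac{\mu_1 - \mu_2}{2r} + w\right) H,
\end{equation*}
with $H = H(\Phi_{\mu_1},\Phi_{\mu_2}) \geq 0$ and, by (\ref{eq:Hupmu}), bounded on $\Rt$ because $\mu_1, \mu_2 > 0$. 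Rearranging,
\begin{equation*}
  \Delta w - H\, w = \frac{\mu_1 - \mu_2}{2r}\, H \geq 0,
\end{equation*}
since $\mu_1 \geq \mu_2$ and $H \geq 0$. Thus $w$ is a subsolution of the operator $L := \Delta - H$ with $H \geq 0$.

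The key point is now to apply the maximum principle to $L w \geq 0$ on the noncompact domain $\Rt$, using the boundary condition $w \to 0$ as $r \to \infty$ (from (\ref{condecby}) applied to both $u_{\mu_1}$ and $u_{\mu_2}$) together with the regularity of $u_{\mu_i} \in C^{2,\alpha}(\Rt)$ at the origin, so there is no inner boundary to worry about. Because the zeroth‑order coefficient $-H$ is nonpositive, the standard maximum principle (for instance, Gilbarg–Trudinger Theorem 3.1, or its version on unbounded domains combined with decay at infinity) applies: a subsolution of $L$ that tends to $0$ at infinity cannot attain a positive interior maximum, hence $w \leq 0$ everywhere. Concretely, if $w$ were positive somewhere, then since $w \to 0$ at infinity its supremum would be attained at some finite interior point $x_0$ with $w(x_0) > 0$; there $\Delta w(x_0) \leq 0$ while $H(x_0) w(x_0) \geq 0$, contradicting $\Delta w - H w \geq 0$ unless the right‑hand side vanishes, and a short argument (strong maximum principle, or examining the strict inequality when $\mu_1 > \mu_2$; the case $\mu_1 = \mu_2$ being trivial by uniqueness) rules this out.

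The main obstacle I anticipate is handling the noncompactness of $\Rt$ cleanly: the naive maximum principle is stated on bounded domains, so one must either invoke a version valid on exterior/unbounded domains with the $w \to 0$ asymptotic condition, or exhaust $\Rt$ by large balls $B_R$ and pass to the limit $R \to \infty$, controlling $\sup_{\partial B_R} w \to 0$. A secondary technical point is that the bound (\ref{eq:Hupmu}) on $H$ degenerates (the coefficient becomes large) only as $r \to 0$, but since $H$ remains bounded there for $\mu_1, \mu_2 > 0$, the operator $L$ has bounded coefficients on all of $\Rt$ and this causes no difficulty — it is precisely the hypothesis $\mu_2 > 0$ in the lemma that makes this work, which is why the monotonicity statement is restricted to strictly positive parameters.
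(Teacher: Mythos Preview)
Your proposal is correct and follows essentially the same route as the paper: define the difference $w$, use the factorization (\ref{eq:Fdif})--(\ref{eq:H}) to obtain $\Delta w - Hw$ equal to a sign-definite term, observe that $H$ is bounded on $\Rt$ precisely because $\mu_1,\mu_2>0$, and conclude via a maximum principle on the unbounded domain using $w\to 0$ at infinity. The only cosmetic difference is the sign convention (you set $w=u_{\mu_1}-u_{\mu_2}$ whereas the paper takes $w=u_{\mu_2}-u_{\mu_1}$), and the paper cites the version of the maximum principle in \cite{Choquet99} for noncompact domains rather than sketching the exhaustion-by-balls argument you outline.
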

\begin{proof}

Define  $w$ by
\begin{equation}\label{w}
w(x)=u_{\mu_2}(x)-u_{\mu_1}(x).
\end{equation}
Using equation (\ref{ecby}), we obtain that $w$ satisfies the equation
\begin{equation}
\label{ecw}
\Delta w=F(x,\Phi_{\mu_2})-F(x,\Phi_{\mu_1}).
\end{equation}
We use (\ref{eq:Fdif}) to write this equation in the following form
\begin{equation}\label{ww}
\Delta w-w H (\Phi_{\mu_2} , \Phi_{\mu_1})=\frac{\mu_2-\mu_1}{2r}H(\Phi_{\mu_2} ,
\Phi_{\mu_1}),
\end{equation}
where $H$ is given by (\ref{eq:H}). Since $H\geq 0$ and by hypothesis
we have $\mu_2-\mu_1\leq 0$, then the right hand side of (\ref{ww}) is
 negative. We also have that $w\to 0$ as $r\to \infty$ (because of
(\ref{condecby})). Hence, we can apply the Maximum Principle for the
Laplace operator to conclude that $w\geq0$ in $\Rt$.  We use a version
of the Maximum Principle for non-bounded domains given in
\cite{Choquet99}. We emphasize that this classical version of the
maximum principle can be applied in the present case because $w$ is 
$C^{2,\alpha}$ and $H$ is bounded in $\Rt$ when $\mu_1,\mu_2 > 0$.
\end{proof}

Remarkably, the sequence $\Phi_\mu$ has the opposite behavior as the
sequence $u_\mu$, namely  $\Phi_\mu$ is increasing with respect to
$\mu$. This is proved in the following lemma.
\begin{lemma}
\label{l:monphi}
 Assume $\mu_1 > \mu_2>0$ then we have $\Phi_{\mu_1}(x)\geq \Phi_{\mu_2}(x)$ in
 $\Rt\setminus \{0\}$.
\end{lemma}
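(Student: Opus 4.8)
The plan is to follow the proof of Lemma \ref{l:monu}, comparing the conformal factors directly, but now keeping track of the singular term $\mu/(2r)$ in (\ref{eq:Phimu}). I would set
\begin{equation}
  \label{eq:vdef}
  v(x)=\Phi_{\mu_1}(x)-\Phi_{\mu_2}(x)=\frac{\mu_1-\mu_2}{2r}+u_{\mu_1}(x)-u_{\mu_2}(x),
\end{equation}
which is smooth on $\Rt\setminus\{0\}$. Subtracting the two instances of equation (\ref{eq:BYPhi}) and using the algebraic identity behind (\ref{eq:Fdif}) --- which is purely elementary and therefore holds for arbitrary positive functions $\Phi_1,\Phi_2$, not only for ordered ones --- I obtain
\begin{equation}
  \label{eq:veq}
  \Delta v - H(\Phi_{\mu_2},\Phi_{\mu_1})\, v = 0 \quad \text{in } \Rt\setminus\{0\},
\end{equation}
with $H\geq 0$ given by (\ref{eq:H}).

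The second step is to read off the behaviour of $v$ at the two ends. As $r\to\infty$ both $u_{\mu_1}$ and $u_{\mu_2}$ tend to zero and the first term in (\ref{eq:vdef}) vanishes, so $v\to 0$. As $r\to 0$ the functions $u_{\mu_1},u_{\mu_2}\in C^{2,\alpha}(\Rt)$ remain bounded while $(\mu_1-\mu_2)/(2r)\to+\infty$, because $\mu_1>\mu_2$; hence $v\to+\infty$ and, in particular, there is $\epsilon_0>0$ with $v>0$ on $\{0<|x|\leq\epsilon_0\}$. This is exactly where the hypothesis $\mu_1>\mu_2$ is used, and it is what makes the statement compatible with Lemma \ref{l:monu} (where the opposite ordering $u_{\mu_1}\leq u_{\mu_2}$ holds): the two terms of $v$ compete, and near the origin the singular one wins.

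The last step is the maximum principle applied on the exterior region $\Rt\setminus \bar B_{\epsilon_0}$, where $H$ is non-negative and bounded (by (\ref{eq:13}), since $r\geq\epsilon_0$ there), so the operator $\Delta-H$ has non-positive zeroth order coefficient. The boundary data $v|_{\partial B_{\epsilon_0}}>0$ and the decay $v\to 0$ at infinity are both $\geq 0$, so the version of the maximum principle for unbounded domains from \cite{Choquet99} used in Lemma \ref{l:monu} gives $v\geq 0$ on $\Rt\setminus \bar B_{\epsilon_0}$; combined with $v>0$ on $\{0<|x|\leq\epsilon_0\}$ this yields $v\geq 0$ on $\Rt\setminus\{0\}$, i.e. $\Phi_{\mu_1}\geq\Phi_{\mu_2}$. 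The only genuinely new point compared with Lemma \ref{l:monu} is controlling the inner boundary at the origin; if one prefers not to excise a ball, the strong maximum principle does the job directly, since $v\to+\infty$ at $0$ and $v\to 0$ at infinity force any negative infimum of $v$ to be a non-positive interior minimum on some annulus $\{\epsilon_0<|x|<R\}$, which is impossible for a non-constant solution of (\ref{eq:veq}). I expect this handling of the origin to be the only slightly delicate step.
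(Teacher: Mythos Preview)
Your proposal is correct and follows essentially the same approach as the paper: define the difference of the conformal factors, obtain the homogeneous equation $\Delta v - Hv = 0$, exploit boundedness of $u_\mu$ to see that the singular term $(\mu_1-\mu_2)/(2r)$ dominates near the origin, excise a small ball $B_{\epsilon_0}$, and apply the maximum principle from \cite{Choquet99} on the exterior region with the inner boundary condition. The paper works with $w=-v$ and proves $w\leq 0$, but the argument is otherwise identical; your remark that handling the origin is the only delicate step is exactly what the paper singles out as well.
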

\begin{proof}
The proof is similar to the one in the  previous lemma, however since $\Phi_\mu$ is
singular at the origin we need to exclude this point from the domains.
In order to handle this, let $\Omega$ be defined as
$\Rt\setminus B_\epsilon$ where $B_\epsilon$ is a small ball of
radius $\epsilon$ centered at the origin.
As before, we define $w$ as the difference
\begin{equation}
\label{eq:wPP}
w=\Phi_{\mu_2}-\Phi_{\mu_1}=\frac{\mu_2-\mu_1}{2r} + u_{\mu_2}-u_{\mu_1} .
\end{equation}
Then, we have
\begin{equation}
\label{eq:wPhi}
\Delta w -wH(\Phi_{\mu_1},\Phi_{\mu_2}) =0,
\end{equation}
where $H$ is given by (\ref{eq:H}).  Since $u_\mu$ is bounded in $\Rt$
for $\mu>0$ then the first term in the right hand side of
(\ref{eq:wPP}) will dominate for sufficiently small $r$. Hence, for
$\mu_1 > \mu_2>0$ there exists $\epsilon$ sufficiently small such that
$w$ is negative on $\partial B_\epsilon$.  Consider equation
(\ref{eq:wPP}) on $\Omega$. The function $w$ is negative on $\partial
B_\epsilon$ and it goes to zero at infinity. Hence, we can apply the
Maximum Principle in $\Omega$ to obtain $w\leq 0$. In this case we
need to slightly modify the version of the maximum principle given in
\cite{Choquet99} to include the inner boundary $\partial
B_\epsilon$. This modification is however straightforward.
\end{proof}

\section{Bounds}
\label{sec:bounds-sequence}
In this section we give bounds for the sequence $u_\mu$ . The main
result of the section is given by theorem \ref{l:rn} where we
construct an upper bound $u_0^+$, based on the Reissner N\"ordstrom
black hole initial data, which does not depends on $\mu$.  The lower
bound is then directly constructed using this upper bound in lemma
\ref{l:rnsub}.

The Reissner N\"ordstrom black hole will play an important role in what
follows. Let us  review it. The Reissner N\"ordstrom metric is
characterized by two parameters: the mass $m$ and the electric charge
$q$. This metric describes a black hole if $|q|\leq m$. When
$|q|= m$ the solution is called the extreme  Reissner N\"ordstrom
black hole.  Take a
 slice $t=constant$ in the canonical coordinates and let $r$ be the
 isotropical radius on this slice. The intrinsic metric on the slice is
 conformally flat, i.e.  it has the  form (\ref{eq:3}) where the conformal
 factor is denoted by $\Phi_\mu^+$ (the reason for the $+$ in the
 notation will became clear later on) and it is explicitly  given by
 \begin{equation}
   \label{eq:26}
\Phi_\mu^+=   \sqrt{1+\frac{m}{r}+\frac{\mu^2}{4r^2}},
 \end{equation}
where the parameter $\mu$ is defined in terms of $m$ and $q$ by
(\ref{eq:40}) (with $J=0$), that is
\begin{equation}
  \label{eq:9}
  m=\sqrt{\mu^2+q^2}.
\end{equation}
Note that, when $q$ is fixed, then $m$ decreases as $\mu$ goes to
zero. We also define the function $u^+_{\mu}(x)$ by
 \begin{equation}
   \label{eq:25}
  \Phi_\mu^+=1+\frac{\mu}{2r} +u^+_{\mu}(x),
 \end{equation}
that is, we have
\begin{equation}
\label{solrn}
u^+_{\mu}(x)=\sqrt{1+\frac{m}{r}+\frac{\mu^2}{4r^2}}-1-\frac{\mu}{2r}.
\end{equation}
The extreme limit corresponds to $\mu=0$, in this limit the solution is
denoted by $u^+_{0}$, we have
\begin{equation}
  \label{eq:extrn}
  u^+_{0}(x)=\sqrt{1+\frac{|q|}{r}}-1.
\end{equation}
As a consequence of the constraint equations the function $u^+_{\mu}$
satisfies
\begin{equation}
\label{ecrn}
\Delta u^+_{\mu}=-\frac{q^2}{4r^4 \left(  \Phi_\mu^+ \right)^3}.
\end{equation}

We have $u^+_{\mu} \geq 0$. From the explicity expresion (\ref{solrn})
we deduce that the sequence $u^+_{\mu}$ is increasing as
$\mu \to 0$ and it is bounded by the extreme solution $u^+_0$, that is
\begin{equation}
u^+_{\mu}(x) < u^+_{0}(x),
\end{equation}
for all $\mu>0$. Also, $u^+_{\mu}(x)$ is smooth on $\Rt\setminus\{0\}$
and, for $\mu>0$, we have $u^+_{\mu}\in C^1(\Rt)$ (but it is not $C^2$
at the origin). The values of the function and its derivative at the
origin are given by
\begin{equation}
u^+_{\mu}(0)=\frac{m}{\mu}-1 \quad \frac{du^+_{\mu}}{dr}(0)=-\frac{q^2}{\mu^3}.
\end{equation}
Note that both values diverge as $\mu \to 0$. In fact the limit
function $u^+_0$ diverges as $r^{-1/2}$
near the origin. We want to prove that a similar behavior occurs for the
Bowen York case.

The following constitutes the main result of this section.
\begin{theorem}
\label{l:rn}
Assume that
\begin{equation}
\label{condicion}
|q|\geq \sqrt{3|J|}.
\end{equation}
Then for all $\mu>0$  we have
\begin{equation}
u_{\mu}(x)\leq u^+_{\mu}(x) <  u^+_{0}(x),
\end{equation}
where  $u^+_\mu$  and $u^+_0$ are given by  (\ref{solrn}) and
(\ref{eq:extrn}) respectively.
\end{theorem}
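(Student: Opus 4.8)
The plan is to exhibit the Reissner--Nordstr\"om function $u^+_\mu$ as a \emph{supersolution} of the Bowen--York equation (\ref{ecby}) and then invoke the maximum principle to obtain $u_\mu\le u^+_\mu$. The second inequality $u^+_\mu<u^+_0$ has already been recorded above from the explicit formula (\ref{solrn}), so all the work lies in the first one.

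\emph{Step 1: the supersolution inequality.} I would show
\begin{equation}
\Delta u^+_\mu\ \le\ F\bigl(x,\Phi^+_\mu\bigr),\qquad \Phi^+_\mu=1+\frac{\mu}{2r}+u^+_\mu .
\end{equation}
By the Reissner--Nordstr\"om identity (\ref{ecrn}) and the definition (\ref{eq:F}) of $F$, this is equivalent to the pointwise algebraic inequality
\begin{equation}
\frac{q^2}{4r^4\bigl(\Phi^+_\mu\bigr)^3}\ \ge\ \frac{9J^2\sin^2\theta}{4r^6\bigl(\Phi^+_\mu\bigr)^7},
\qquad\text{equivalently}\qquad
q^2\bigl(\Phi^+_\mu\bigr)^4\ \ge\ \frac{9J^2\sin^2\theta}{r^2}.
\end{equation}
From (\ref{eq:26}) one has $\bigl(\Phi^+_\mu\bigr)^2=1+\frac{m}{r}+\frac{\mu^2}{4r^2}\ge\frac{m}{r}$, hence $\bigl(\Phi^+_\mu\bigr)^4\ge m^2/r^2$, and by (\ref{eq:9}) $m^2=\mu^2+q^2\ge q^2$. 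Therefore the left-hand side is bounded below by $q^2m^2/r^2\ge q^4/r^2$, while the right-hand side is at most $9J^2/r^2$ because $\sin^2\theta\le1$. The inequality thus reduces to $q^4\ge 9J^2$, which is precisely the hypothesis (\ref{condicion}). This step is the crux of the argument: the nontrivial point is the realization that a Reissner--Nordstr\"om datum, with its charge constrained by (\ref{condicion}), furnishes a barrier that neither degenerates nor blows up as $\mu\to0$.

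\emph{Step 2: comparison via the maximum principle.} Put $w=u_\mu-u^+_\mu=\Phi_\mu-\Phi^+_\mu$. Subtracting (\ref{ecrn}) from (\ref{ecby}) and using Step 1 together with the factorization identity (\ref{eq:Fdif}) (which is valid for any two positive functions, independently of their ordering) gives
\begin{equation}
\Delta w\ \ge\ F(x,\Phi_\mu)-F(x,\Phi^+_\mu)\ =\ w\,H\bigl(\Phi^+_\mu,\Phi_\mu\bigr),
\end{equation}
i.e.\ $\Delta w-H(\Phi^+_\mu,\Phi_\mu)\,w\ge0$ with $H\ge0$ given by (\ref{eq:H}). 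Since $u_\mu\ge0$ and $u^+_\mu\ge0$ for $\mu>0$, the estimate (\ref{eq:Hupmu}) applies with the two arguments of $H$ taken to be $\Phi^+_\mu$ and $\Phi_\mu$, and shows that $H$ is bounded on $\Rt$; moreover $w\to0$ as $r\to\infty$ by (\ref{condecby}). The maximum principle for $\Delta-H$ on an unbounded domain --- the version of \cite{Choquet99} already used in Lemma \ref{l:monu} --- then yields $w\le0$ in $\Rt$, as claimed.

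\emph{Main obstacle.} The genuinely new input is Step 1: identifying the correct barrier and recognizing that (\ref{condicion}) is exactly what makes the $\sin^2\theta$-source of the Bowen--York equation dominated by the spherically symmetric charge source of Reissner--Nordstr\"om. Step 2 is then routine, the one technical subtlety being that $u^+_\mu$ is only $C^1$ (not $C^2$) at the origin; but $u^+_\mu$ is continuous across $r=0$ and $\Delta u^+_\mu\in L^1_{\mathrm{loc}}$ there with no concentrated mass, so the comparison is extended over the origin by an exhaustion argument, just as the inner boundary is handled in Lemma \ref{l:monphi}. Alternatively, since $0$ is a subsolution and $u^+_\mu$ a supersolution with $0\le u^+_\mu$, one could invoke the sub- and supersolution theorem of \cite{Choquet99} to produce a solution trapped between them and identify it with $u_\mu$ by uniqueness, which bypasses the origin issue altogether.
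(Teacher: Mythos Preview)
Your proof is correct and follows essentially the same route as the paper: Step~1 (the algebraic supersolution estimate reducing to $q^4\ge 9J^2$) and Step~2 (the comparison via $\Delta w - Hw\ge 0$ and the maximum principle) match the paper's argument line by line. The only noteworthy difference is how the $C^2$ failure of $u^+_\mu$ at the origin is handled: the paper invokes a weak maximum principle for $H^2$ solutions from \cite{Maxwell04}, whereas you sketch an exhaustion (removable-singularity) argument and offer the sub-/supersolution theorem of \cite{Choquet99} as an alternative---both are valid ways around the same technicality.
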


\begin{proof}
From (\ref{ecrn})  and assuming that condition
(\ref{condicion}) holds, we obtain
\begin{equation}
\label{eq:u++b}
  \Delta u^+_{\mu}=
   -\frac{q^2}{4r^4\left(  \Phi_\mu^+ \right)^3}\leq-
\frac{9J^2\sin^2\theta}{q^24r^4   \left(  \Phi_\mu^+ \right)^3  },
\end{equation}
Then, since  $m\geq |q|$ we have
\begin{equation}
  \label{eq:10}
 \left(  \Phi_\mu^+ \right)^4  \geq \left(1+ \frac{|q|}{r}\right)^2\geq\frac{q^2}{r^2},
\end{equation}
which gives us
\begin{equation}
\label{eq:u++}
\Delta u^+_{\mu}
\leq-\frac{9J^2\sin^2\theta}{4r^6  \left(  \Phi_\mu^+ \right)^{7}}=
F(x,\Phi^+_{\mu}).
\end{equation}
Now, we define the difference
\begin{equation}
  \label{eq:6}
  w= u^+_\mu- u_\mu,
\end{equation}
Using equation (\ref{ecbye})  and (\ref{eq:u++})  we obtain
\begin{equation}
  \label{eq:7}
  \Delta w \leq  F(x,\Phi^+_\mu)- F(x,\Phi_\mu).
\end{equation}
We use formula (\ref{eq:Fdif}) to conclude that
\begin{equation}
  \label{eq:8}
   \Delta w -w H(\Phi^+_\mu, \Phi_\mu) \leq 0.
\end{equation}
Note that the function $w$ is not $C^2$ at the origin because
$u^+_\mu$ is not $C^2$ there. However, the function $w$ is a weak
solution of (\ref{eq:8}) also at the origin. And hence we can apply a
weak version of the Maximum Principle for non-bounded domains. See,
for example, \cite{Maxwell04}. In this reference, the maximum
principle is proven for $H^2$ solutions. Our function $w$ satisfies
this property and since $w$ goes to zero as $r\to \infty$, we conclude
that $w\geq 0$. 

As a side comment, we note that in order to apply the maximum
principle it is only required that $w\in H^1$. Suitable versions of
the maximum principle for non-bounded domains for $H^1$ solutions can
be deduced from the weak maximum principle for bounded domains given
in \cite{Gilbarg}.
\end{proof}

Since $F$ is non-decreasing, once an upper bound is found for the sequence, the
construction of a lower bound is straightforward. Namely,
we define $u^-_\mu(x)$ as the solution of the following linear
Poisson equation
\begin{equation}
\label{eq:subrn}
  \Delta u^-_\mu  =F(x,\Phi^+_\mu)
  =- \frac{9J^2\sin^2\theta}{4r^6\left(1+\frac{m}{r}+
\frac{\mu^2}{4r^2}\right)^{7/2}},
\end{equation}
with the fall off condition
\begin{equation}
  \label{eq:28}
 \lim_{r\to \infty} u^-_\mu=0.
\end{equation}

\begin{lemma}
\label{l:rnsub}
Let $u^-_\mu$ be the solution of  (\ref{eq:subrn}) with the asymptotic
condition \eqref{eq:28}.  We have that for all $\mu > 0$
\begin{equation}
  \label{eq:14}
 u^-_\mu(x) \leq  u_\mu(x),
\end{equation}
and
\begin{equation}
  \label{eq:33}
  \frac{\mu}{2r} + u^-_\mu(x)  \geq  u^-_0(x).
\end{equation}
The  function $u^-_0$ has the following behavior
\begin{align}
  \label{eq:15}
  u^-_0(x) &= \frac{C_1}{r}+O(r^{-2}), \quad \text{ as } r\to \infty,\\
  u^-_0(x) &=\frac{C_2}{\sqrt{r}}+O(1), \quad \text{ as } r\to 0.  \label{eq:15b}
\end{align}
where $C_1,C_2>0$.
\end{lemma}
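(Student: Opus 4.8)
The three assertions are all comparisons between functions satisfying linear or quasilinear elliptic equations, so the natural tool throughout is the maximum principle for unbounded domains, exactly as in the preceding lemmas; the only genuinely new difficulty is the behaviour near the origin, where $u^-_0$ is no longer bounded.

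\textbf{The bound (\ref{eq:14}).} By Theorem \ref{l:rn} we have $u_\mu\le u^+_\mu$, hence $\Phi_\mu=1+\tfrac{\mu}{2r}+u_\mu\le 1+\tfrac{\mu}{2r}+u^+_\mu=\Phi^+_\mu$; since $F(x,\cdot)$ is non-decreasing, $\Delta u_\mu=F(x,\Phi_\mu)\le F(x,\Phi^+_\mu)=\Delta u^-_\mu$, so $w:=u^-_\mu-u_\mu$ satisfies $\Delta w\ge0$. For $\mu>0$ the source $F(x,\Phi^+_\mu)$ is bounded (it vanishes like $r$ as $r\to0$), so $u^-_\mu$ is bounded and continuous at the origin while $u_\mu\in C^{2,\alpha}(\Rt)$; since moreover $w\to0$ as $r\to\infty$, the maximum principle gives $w\le0$, which is (\ref{eq:14}). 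One may equivalently use (\ref{eq:Fdif}) to write $\Delta w=(u^+_\mu-u_\mu)\,H(\Phi_\mu,\Phi^+_\mu)\ge0$.

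\textbf{The bound (\ref{eq:33}).} Because $m=\sqrt{\mu^2+q^2}\ge|q|$ one has $1+\tfrac{m}{r}+\tfrac{\mu^2}{4r^2}\ge 1+\tfrac{|q|}{r}$, hence $\Phi^+_\mu\ge\Phi^+_0$, and, $F$ being non-decreasing,
\[
\Delta\Big(\tfrac{\mu}{2r}+u^-_\mu-u^-_0\Big)=F(x,\Phi^+_\mu)-F(x,\Phi^+_0)=G(r)\sin^2\theta\ \ge0\quad\text{in }\Rt\setminus\{0\},
\]
where $G(r):=\tfrac{9J^2}{4r^6}\big((\Phi^+_0)^{-7}-(\Phi^+_\mu)^{-7}\big)\ge0$. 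Put $v:=\tfrac{\mu}{2r}+u^-_\mu-u^-_0$. The boundary data are favourable — $v\to0$ as $r\to\infty$, and $v\to+\infty$ as $r\to0$ since the $\tfrac{\mu}{2r}$ term dominates the $O(r^{-1/2})$ singularity of $u^-_0$ — but $v$ is only \emph{sub}harmonic, so the naive exterior maximum principle does not by itself give $v\ge0$; one must argue mode by mode. As $\sin^2\theta$ carries only the spherical harmonics $Y_{00}$ and $Y_{20}$, write $v=v_0(r)Y_{00}+v_2(r)Y_{20}(\theta)$. For the $Y_{20}$–component the radial operator is $\partial_r^2+\tfrac2r\partial_r-\tfrac{6}{r^2}$, whose zeroth–order coefficient has the sign needed for a two–sided maximum principle, so $v_2$ inherits the boundary sign and is dominated by the $\ell=0$ Newtonian potential of $G$. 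For the $Y_{00}$–component one integrates the ODE directly: $(r^2v_0')'=\tfrac23\,r^2G(r)\ge0$, so $r^2v_0'$ is non–decreasing, equal to $-\tfrac{\mu}{2}$ at $r=0^+$ and to $-C_1^{v}$ at $r=\infty$, $C_1^{v}$ being the coefficient of $r^{-1}$ in $v_0$; hence $-s^2v_0'(s)$ is non–increasing and $\ge C_1^{v}$, which yields $v_0(r)\ge C_1^{v}/r$, and one is reduced to showing $C_1^{v}\ge0$. By the $r\to\infty$ asymptotics of $u^-_\mu$ and $u^-_0$ this is exactly the mass inequality $\int_{\Rt}\big(F(x,\Phi^+_\mu)-F(x,\Phi^+_0)\big)\,dx\le 2\pi\mu$.

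\textbf{The mass bound, and (\ref{eq:15})–(\ref{eq:15b}).} The inequality $\int_{\Rt}\big(F(x,\Phi^+_\mu)-F(x,\Phi^+_0)\big)\,dx\le2\pi\mu$ is the one place where hypothesis (\ref{condicion}) is used beyond Theorem \ref{l:rn}. Estimating $(\Phi^+_0)^{-7}-(\Phi^+_\mu)^{-7}$ by comparison with the Reissner–Nordstr\"om sources (via (\ref{eq:Fdif}), $(\Phi^+_\mu)^4\ge q^2/r^2$ and $q^4\ge9J^2$, as in the proof of Theorem \ref{l:rn}) bounds the integrand pointwise by a multiple of $q^2/(4r^4(\Phi^+_0)^3)=-\Delta u^+_0$, which integrates only to $2\pi|q|$; to reach $2\pi\mu$ one must use that the excess of $\Phi^+_\mu$ over $\Phi^+_0$ is concentrated in the region $r\lesssim\mu^2/|q|$, and a region–by–region estimate there gives $O(\mu)$ with a constant governed by $J^2/|q|^4\le\tfrac19$, comfortably below $2\pi$. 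Once $C_1^{v}\ge0$ is in hand, combining $v_0\ge C_1^{v}/r$ with the $\ell=2$ estimate gives $v\ge0$ pointwise, i.e. (\ref{eq:33}). Finally, (\ref{eq:15})–(\ref{eq:15b}) are read off from the explicit expressions (\ref{eq:I1mu1})–(\ref{eq:I2mu2}): decomposing $\sin^2\theta$ into its $Y_{00}$ and $Y_{20}$ parts, $u^-_0$ solves two uncoupled radial equations with right–hand sides proportional to $r^{-5/2}(r+|q|)^{-7/2}$, integrable by quadrature to give $I_1,I_2$; expanding at $r=\infty$ gives the $r^{-1}$ behaviour with $C_1>0$ (the $\ell=0$ part of the source being strictly negative), and expanding at $r=0$ gives the $r^{-1/2}$ behaviour with strictly positive leading coefficient (the angular profile solving $\Delta_{S^2}g-\tfrac14 g<0$, for which the maximum principle applies since $-\tfrac14$ is not an eigenvalue of the Laplacian on $S^2$).

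The hard part is the mass bound $\int_{\Rt}\big(F(x,\Phi^+_\mu)-F(x,\Phi^+_0)\big)\,dx\le2\pi\mu$ — equivalently the sign of $C_1^{v}$: it is the only step that must quantitatively balance the singular $\tfrac{\mu}{2r}$ term against the source difference, and it is sharpest precisely when (\ref{condicion}) holds with equality. Alternatively, (\ref{eq:33}) can be obtained without this estimate by solving the linear Poisson equation (\ref{eq:subrn}) explicitly, mode by mode, for every $\mu\ge0$ — the source being an elementary function of $r$ — and comparing the resulting radial functions directly.
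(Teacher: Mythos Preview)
Your treatment of (\ref{eq:14}) and of the asymptotics (\ref{eq:15})--(\ref{eq:15b}) is correct and essentially identical to the paper's: the same difference $w$, the same use of Theorem~\ref{l:rn} and the non-decreasing property of $F$, and the same appeal to the explicit formulas in the Appendix.

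For (\ref{eq:33}) the situation is different. The paper's proof is a single sentence invoking ``a similar argument as in the proof of lemma~\ref{l:monphi}'': excise a ball $B_\epsilon$, use that $\mu/(2r)$ dominates near the origin, and apply the maximum principle on the exterior. You have correctly noticed that this does \emph{not} go through as stated. In lemma~\ref{l:monphi} the difference satisfies the homogeneous equation $\Delta w - wH=0$ because $\Phi_\mu$ solves the \emph{nonlinear} equation, and the zeroth-order term $-H\le 0$ gives a two-sided maximum principle. Here $u^-_\mu$ solves a \emph{linear} Poisson equation with source independent of $u^-_\mu$, so $v:=\mu/(2r)+u^-_\mu-u^-_0$ is merely subharmonic ($\Delta v\ge 0$), and the maximum principle yields only an upper bound on $v$, not the lower bound $v\ge 0$ that is required. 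Your observation exposes a genuine gap in the paper's sketch.

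That said, your proposed remedy is itself incomplete. The $\ell=0$ analysis is sound and does reduce positivity of $v_0$ to the sign of the coefficient $C_1^v$, but your justification of the ``mass bound'' $\int_{\Rt}\big(F(x,\Phi^+_\mu)-F(x,\Phi^+_0)\big)\,dx\le 2\pi\mu$ is only heuristic (``a region-by-region estimate there gives $O(\mu)$\ldots comfortably below $2\pi$''), and your treatment of the $\ell=2$ component is too vague to be a proof: $v_2$ receives no contribution from $\mu/(2r)$, so its behaviour near the origin is not obviously sign-definite, and ``dominated by the $\ell=0$ Newtonian potential of $G$'' does not by itself yield the pointwise inequality $v\ge 0$. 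The clean route is the one you mention at the end as an alternative and for which the paper's Appendix already supplies the ingredients: the closed-form radial integrals (\ref{eq:45})--(\ref{eq:46}) give $u^-_\mu$ explicitly for every $\mu\ge 0$, and (\ref{eq:33}) can then be checked directly from those formulas. If you want a maximum-principle proof, you must either supply the missing quantitative estimate for $C_1^v$ and a genuine bound on the $\ell=2$ piece, or find a way to reintroduce a zeroth-order term of the right sign; neither is done here.
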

\begin{proof}
The solution  can be explicitly constructed  using the fundamental
solution (or Green function) of the Laplacian (see the Appendix). From the standard elliptic
estimates (or directly from the explicit expression) we deduce that
$u^-_\mu \in C^{2,\alpha}(\Rt)$ for $\mu>0$.

Let us prove inequality (\ref{eq:14}).  As usual we take the
difference $w=u_\mu-u^-_\mu$, then, using equation (\ref{eq:subrn}) we
have
\begin{equation}
  \label{eq:16}
  \Delta w = F(x, \Phi_\mu)- F(x, \Phi^+_\mu)=(u_\mu-u^+_\mu)H(\Phi_\mu,\Phi^+_\mu).
\end{equation}
Since $u_\mu-u^+_\mu\leq 0$ by lemma \ref{l:rn} we obtain  $\Delta
w\leq 0$ and then by the maximum principle we get $w\geq 0$.

To prove inequality (\ref{eq:33}) we use a similar argument
as in the proof of lemma \ref{l:monphi}. Note that we can in principle
deduce  (\ref{eq:33}) from the explicit expression for $u^-_\mu$,
however the formula is so complicated that this is not straightforward.

Finally, the fall-off behavior \eqref{eq:15}--\eqref{eq:15b} is
obtained from the explicit expression of $u^-_0$ given in the
Appendix (see equation
(\ref{eq:34}) and \eqref{eq:29}).

\end{proof}

Note that the sequence $u^-_\mu$ is monotonic in $\mu$, as the
Bowen-York sequence $u_\mu$. Namely, for
$\mu_1\geq\mu_2\geq 0$, we obtain
\begin{equation}
u^-_{\mu_1}(x)\leq u^-_{\mu_2}(x)\leq u^-_{0}(x),
\end{equation}
and also for  $\mu_1>\mu_2 > 0$ we  have
\begin{equation}\label{ordenfimenos}
\Phi^-_{\mu_1}(x)\geq \Phi^-_{\mu_2}(x)\geq \Phi^-_{0}(x)
\end{equation}
where $\Phi^-$ is defined as
\begin{equation}
\label{eq:30}
  \Phi^-_\mu=1+\frac{\mu}{2r}+u^-_\mu.
\end{equation}

\section{Convergence}
\label{sec:convergence-sequence}
In this section we prove that the sequence $u_\mu$ converges in the
limit $\mu \to 0$. We begin with the interior convergence. We will
make use of Lebesgue spaces $L^2$ and Sobolev spaces $H^2$ (for
definition and properties of these functional spaces see, for example,
\cite{Gilbarg}).

\begin{lemma}
\label{l:icon}
Let $U$ be an arbitrary open ball contained in $\Rt\setminus
\{0\}$. Then the sequence $u_\mu$ converges in the $H^2(U)$
norm. Moreover, the limit function
\begin{equation}
  \label{eq:12}
u_0 =\lim_{\mu \to 0} u_\mu,
\end{equation}
is a solution of equation (\ref{ecbye}) in $U$ and $u_0\in C^\infty(U)$.
\end{lemma}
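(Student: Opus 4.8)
The plan is to combine the monotonicity of the sequence $u_\mu$ (lemma \ref{l:monu}) with the $\mu$-independent bounds (theorem \ref{l:rn} and lemma \ref{l:rnsub}) and then upgrade pointwise convergence to $H^2$-convergence on compact sets using interior elliptic estimates for the Laplacian. First I would fix an open ball $U$ with $\bar U \subset \Rt\setminus\{0\}$, and a slightly larger ball $U'$ with $\bar U' \subset \Rt\setminus\{0\}$. On $U'$ the origin is excluded, so the bound (\ref{eq:13}) on $H$ and, more importantly, the right-hand side $F(x,\Phi_\mu)$ of (\ref{ecby}) are uniformly bounded in $\mu$: indeed by lemma \ref{l:monu}, $u_\mu$ is increasing as $\mu\to0$ and bounded above by $u_0^+$, and below by $u_0^-$ (via lemma \ref{l:rnsub}, since $\frac{\mu}{2r}+u_\mu^-\ge u_0^-$ and $u_\mu\ge u_\mu^-$ actually needs the cleaner bound $u_\mu\ge u_\mu^-\ge u_{\mu_1}^-$; in any case $u_\mu$ is pinned between two fixed continuous functions on $\bar U'$). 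Hence $\Phi_\mu = 1+\frac{\mu}{2r}+u_\mu$ is bounded between fixed positive constants on $\bar U'$ (bounded below away from $0$ since $\Phi_\mu\ge 1$), so $\|F(x,\Phi_\mu)\|_{L^\infty(U')}\le K$ with $K$ independent of $\mu$.

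Next I would invoke the pointwise monotone convergence: since $u_\mu(x)$ is monotonically increasing as $\mu\to0$ and bounded above, $u_\mu(x)\to u_0(x)$ pointwise for every $x\in\Rt\setminus\{0\}$, with $u_0^-\le u_0\le u_0^+$. By dominated convergence this gives $u_\mu\to u_0$ in $L^2(U')$. Now apply the interior $L^2$ (Calderón--Zygmund) estimate for $\Delta$: for $\mu_1,\mu_2>0$, the difference $v=u_{\mu_1}-u_{\mu_2}$ satisfies $\Delta v = F(x,\Phi_{\mu_1})-F(x,\Phi_{\mu_2})$, and using (\ref{eq:Fdif}), $\Delta v = (\Phi_{\mu_1}-\Phi_{\mu_2})H$ with $H$ bounded on $U'$; since $\Phi_{\mu_1}-\Phi_{\mu_2}=\frac{\mu_1-\mu_2}{2r}+u_{\mu_1}-u_{\mu_2}$ and both $\mu_i\to0$, the right-hand side tends to $0$ in $L^2(U')$. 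The estimate $\|v\|_{H^2(U)}\le C(\|\Delta v\|_{L^2(U')}+\|v\|_{L^2(U')})$ then shows $\{u_\mu\}$ is Cauchy in $H^2(U)$, so it converges in $H^2(U)$, and by uniqueness of limits the limit is $u_0$. Passing to the limit in the weak formulation of (\ref{ecby}) on $U$ — the right-hand side $F(x,\Phi_\mu)\to F(x,\Phi_0)$ in $L^2(U)$ by dominated convergence, using continuity of $F$ in $\Phi$ and the uniform positive lower bound on $\Phi_\mu$ — shows $u_0$ solves (\ref{ecbye}) weakly, hence (by elliptic regularity) strongly, on $U$.

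Finally, smoothness: on $U$ the function $u_0$ is in $H^2\subset C^{0,\alpha}$ (Sobolev embedding in three dimensions), so $\Phi_0=1+u_0\in C^{0,\alpha}(U)$ and is bounded below by $1$; therefore $F(x,\Phi_0)\in C^{0,\alpha}(U)$, and Schauder estimates give $u_0\in C^{2,\alpha}(U)$. Then $F(x,\Phi_0)\in C^{2,\alpha}$, a bootstrap yields $u_0\in C^{4,\alpha}$, and iterating indefinitely (each gain of two derivatives in $u_0$ improves the regularity of $F(x,\Phi_0)$ by two, since $F$ is a smooth function of $\Phi_0>0$ and of $x\in U$) gives $u_0\in C^\infty(U)$. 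Since $U$ was an arbitrary ball away from the origin, $u_0\in C^\infty(\Rt\setminus\{0\})$.

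The main obstacle is establishing the uniform-in-$\mu$ control of $F(x,\Phi_\mu)$ and of $H(\Phi_{\mu_1},\Phi_{\mu_2})$ on the enlarged ball $U'$, which rests entirely on having the two fixed barriers $u_0^-\le u_\mu\le u_0^+$ from theorem \ref{l:rn} and lemma \ref{l:rnsub}: one needs both the upper bound (to keep $\Phi_\mu$ from blowing up) and, crucially, that $\Phi_\mu$ stays bounded away from zero — here the trivial bound $\Phi_\mu\ge 1$ suffices, so the genuinely delicate input is only the upper bound. Once these uniform bounds are in hand, everything else is a standard elliptic-estimate-plus-bootstrap argument; the passage to the limit in the nonlinear term is routine given the uniform lower bound $\Phi_\mu\ge 1$ and dominated convergence on the compactly-contained ball that avoids the origin.
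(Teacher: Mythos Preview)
Your proof is correct and follows essentially the same route as the paper: pointwise monotone convergence plus the $\mu$-independent upper bound $u_0^+$ give $L^2(U')$ convergence by dominated convergence, then the interior elliptic estimate $\|w\|_{H^2(U)}\le C(\|\Delta w\|_{L^2(U')}+\|w\|_{L^2(U')})$ together with the $\mu$-uniform bound on $H$ (equation (\ref{eq:13}), finite on $U'$ since the origin is excluded) upgrades this to $H^2(U)$ convergence, and the Sobolev--Schauder bootstrap gives smoothness. Your observation that the trivial bound $\Phi_\mu\ge 1$ already keeps $F$ bounded below (so the lower barrier $u_0^-$ is not really needed here) is correct and matches the paper, which likewise uses only the upper bound in this lemma.
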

\begin{proof}
Let $U'$ be an open ball contained in $\Rt\setminus
\{0\}$ such that $U\subset \subset U'$.
Let $x\in U'$ be an arbitrary but fixed point. Consider the sequence of real
numbers  $u_\mu(x)$ for $\mu \to 0$.
By lemma \ref{l:monu} the sequence  is non-decreasing  and by
lemma \ref{l:rn} it is bounded from above by  $u_\mu(x) \leq
u^+_0(x)$. Note that it is important that the closure of
$U'$ does not contain the origin $\{0\}$, since $u^+_0$ is not bounded there.
It follows that the sequence converges pointwise to a limit
$u_0(x)$. To prove convergence in Lebesgue norm we use the Dominated
Convergence Theorem (see e.g. \cite{Evans98}). In particular, this
implies that the sequence converges in $L^2(U')$, i.e. the sequence
$u_\mu$ is Cauchy in $L^2(U')$
\begin{equation}
\label{limw}
\lim_{\mu_1,\mu_2\rightarrow 0}||w||_{L^2(U')}=0,
\end{equation}
 where $w=u_{\mu_2}-u_{\mu_1}$.

 To prove that the sequence $u_{\mu}$ is a Cauchy sequence  in
 $H^2(U)$ we use the standard elliptic estimate for the Laplacian (see
 e.g. \cite{Gilbarg})
\begin{equation}
\label{estimacion}
||w||_{H^2(U)}\leq C\left(||\Delta w||_{L^2(U')}+||w||
_{L^2(U')}\right)
\end{equation}
where the constant $C$ depends only on $U'$ and
$U$.

The difference $w$ satisfies equation (\ref{ww}), then we obtain
\begin{align}
\label{deltaa}
||\Delta w||_{L^2(U')} &=
\left \Vert Hw+H\frac{\mu_2-\mu_1}{r}\right\Vert_{L^2(U')},\\
& \leq ||Hw||_{L^2(U')}+(\mu_1-\mu_2)\left\Vert\frac{H}{r}\right\Vert_{L^2(U')}.
\end{align}
The functions $H$ and $H/r$ are bounded in $U'$ (see equation
(\ref{eq:13})) by a constant independent of $\mu$. Then, from the
inequality (\ref{deltaa}) we obtain
\begin{equation}
\label{delta1a}
||\Delta w||_{L^2(U')}\leq
C\left(||w||_{L^2(U')}+(\mu_1-\mu_2)\right).
\end{equation}
where $C$ does not depend on $\mu$. Using the estimate (\ref{estimacion})
we finally get
\begin{equation}
  ||w||_{H^2(U)}\leq C\left(||w||_{L^2(U')}+(\mu_1-\mu_2)\right).
\end{equation}
From this inequality and the convergence in $L^2$ given by
(\ref{limw}) we conclude that
\begin{equation}
\lim_{\mu_1,\mu_2\rightarrow 0}||w||_{H^2(U)}=0.
\end{equation}
and hence $u_{0}=\Phi_0-1\in H^2(U)$. By the same argument we also have that
$u_0$ is a strong solution (see
\cite{Gilbarg} for the definition of strong solutions for elliptic
equations) of equation (\ref{ecbye}) in $U$.

Using the standard elliptic estimates once again and iterating
using equation  (\ref{ecbye}) we get that $u_0\in
C^\infty(U)$. This iteration can be done as follows. By the Sobolev
imbedding theorem we have that $u_0\in C^\alpha(U)$. Then, it follows
that $F(x,\Phi_{0})\in C^\alpha(U) $.  But then, by H\"older estimates
for the Laplace operator (see \cite{Gilbarg}) it follows that $u_0\in
C^{2,\alpha}(U)$. We can iterate this argument to obtain that $u_0$ is
smooth in $U$.
\end{proof}

In the previous theorem we have not analyzed the fall off of the
solution $u_0$ at infinity and its behavior at the origin. In order
to do so, more precise estimates are required. In particular, we
need to make use of weighted Sobolev norms. We will use the weighted
Sobolev spaces defined in \cite{Bartnik86} and denoted here by
$H'^{k,\delta}$. The definitions of the corresponding norms are the
following (we restrict ourselves to the case $p=2$ and dimension $3$)
\begin{equation}
\label{eq:35}
||f||'_{L'^{2,\delta}}=\left(\int_{\Rt\setminus\{0\}} |f|^2r^{-2\delta-3}dx\right)^{1/2},
\end{equation}
and
\begin{equation}
\label{eq:38}
||f||'_{H'^{k,\delta}}:=\sum_{0}^{k}||D^jf||'_{L'^{2,\delta-j}}.
\end{equation}
These functional spaces are relevant for our purpose because we have that  
\begin{equation}
\label{sobrn}
  u^+_{\mu}(x)\in H'^{2,\delta} \quad \text{ for }    -1<\delta< -1/2,
\end{equation}
for all $\mu\geq 0$.  We can understand the given range of $\delta$ by
noticing that the extreme Reissner N\"ordstrom solution goes as
$r^{-1/2}$ as $r\rightarrow0$, and as $r^{-1}$ as
$r\rightarrow\infty$. It can also be seen that, if we consider only
solutions with $\mu>0$, then the allowed interval for $\delta$ expands
to $(-1,0)$ reflecting the fact that in this case, the functions are
bounded at the origin.

\begin{lemma}
\label{l:ws}
The sequence $u_\mu$ is Cauchy in the norm $H'^{2,\delta}$ for
$-1<\delta<-1/2$.
\end{lemma}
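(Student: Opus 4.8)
The plan is to estimate the weighted $H'^{2,\delta}$-norm of the difference $w=u_{\mu_2}-u_{\mu_1}$ by splitting $\Rt\setminus\{0\}$ into three regions: a neighborhood of the origin $B_\rho$, a neighborhood of infinity $\Rt\setminus B_R$, and the compact annulus $A_{\rho,R}=B_R\setminus B_\rho$ in between. On the annulus we already have convergence: by Lemma \ref{l:icon} the sequence $u_\mu$ is Cauchy in $H^2$ on any ball away from the origin, hence by a standard covering argument it is Cauchy in $H^2(A_{\rho,R})$, and on a fixed compact set away from $0$ the weighted norm $H'^{2,\delta}(A_{\rho,R})$ is equivalent to the ordinary $H^2(A_{\rho,R})$ norm (the weights $r^{-2\delta-3+2j}$ are bounded above and below by positive constants there). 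So the annular contribution goes to zero as $\mu_1,\mu_2\to0$.

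The content of the lemma is therefore the two tail estimates, and here the key point is that the bounds of Section \ref{sec:bounds-sequence} are \emph{uniform in $\mu$}. First I would use the monotonicity (Lemma \ref{l:monu}) together with the $\mu$-independent bounds $u^-_0\le u^-_\mu+\frac{\mu}{2r}\le u_\mu+\frac{\mu}{2r}$ and $u_\mu\le u^+_0$ of Theorem \ref{l:rn} and Lemma \ref{l:rnsub}. Since $u^+_0=O(r^{-1/2})$ near $0$ and $O(r^{-1})$ at infinity, and $u^-_0$ has the same behavior by \eqref{eq:15}--\eqref{eq:15b}, we get a dominating function $g$, independent of $\mu$, with $|u_\mu|\le g$ everywhere and $g\in L'^{2,\delta}$ precisely for $-1<\delta<-1/2$ (this is the computation behind \eqref{sobrn}). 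Hence $|w|\le 2g$ pointwise, and by the Dominated Convergence Theorem applied to the weighted measure $r^{-2\delta-3}dx$ on the tails, $\|w\|'_{L'^{2,\delta}(B_\rho)}$ and $\|w\|'_{L'^{2,\delta}(\Rt\setminus B_R)}$ can be made arbitrarily small by choosing $\rho$ small and $R$ large, uniformly in $\mu_1,\mu_2$ — because $w\to0$ pointwise on the tails as well.

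It remains to control the first and second derivatives of $w$ in the weighted norm on the two tails. For this I would invoke the scaled (weighted) elliptic estimate for the Laplacian on dyadic annuli $A_k=\{2^k\le r\le 2^{k+1}\}$: writing $\Delta w = Hw + H\frac{\mu_2-\mu_1}{2r}$ from \eqref{ww}, one has on each $A_k$
\begin{equation}
\sum_{j=0}^{2} 2^{k(j-\delta)} \|D^j w\|_{L^2(A_k)} \le C\Big( 2^{k(2-\delta)}\|\Delta w\|_{L^2(\widetilde A_k)} + 2^{-k\delta}\|w\|_{L^2(\widetilde A_k)}\Big),
\end{equation}
with $C$ independent of $k$ by scaling, where $\widetilde A_k$ is the doubled annulus. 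Since $|H|\le 63 J^2\sin^2\theta/(4r^6)$ by \eqref{eq:13} and $r^{-1}$ is controlled by powers of $2^k$ on $A_k$, the right-hand side is dominated by a $\mu$-independent geometric series in $k$ plus a term proportional to $(\mu_1-\mu_2)$, and summing over the dyadic annuli covering $B_\rho$ (respectively $\Rt\setminus B_R$) gives a tail bound that is small for $\rho$ small (resp.\ $R$ large), uniformly in $\mu$. Combining the three regions yields $\|w\|'_{H'^{2,\delta}}\to0$ as $\mu_1,\mu_2\to0$. The main obstacle is the bookkeeping in the tail near the origin: one must check that the weight exponent $-1<\delta<-1/2$ is exactly the range in which both the dominating function $g$ and the geometric series from the dyadic elliptic estimates converge, and that the singular coefficient $H\sim r^{-6}$ in \eqref{ww}, though unbounded, is still absorbed by the weights because it is multiplied by $w\lesssim r^{-1/2}$ and by $r^{-1}$ respectively — i.e.\ the source terms $Hw$ and $H/r$ lie in $L'^{2,\delta-2}$, which is what the weighted estimate consumes.
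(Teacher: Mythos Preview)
Your overall architecture --- Dominated Convergence to get $\|w\|_{L'^{2,\delta}}\to 0$, then an elliptic estimate to upgrade to $H'^{2,\delta}$ --- is exactly the paper's strategy. The paper, however, avoids your three-region/dyadic machinery by invoking directly the global weighted estimate $\|w\|_{H'^{2,\delta}}\le C\|\Delta w\|_{L'^{2,\delta-2}}$ from \cite{Bartnik86}; your dyadic argument is essentially a hand-rolled proof of that same inequality. The splitting into $B_\rho$, $A_{\rho,R}$, $\Rt\setminus B_R$ and the appeal to Lemma~\ref{l:icon} are therefore unnecessary: one applies DCT once on all of $\Rt\setminus\{0\}$ with dominating function $u^+_0\in L'^{2,\delta}$, and then the global estimate.

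There is, however, a genuine gap in your treatment of the source term near the origin. You invoke only the crude bound \eqref{eq:13}, $|H|\le Cr^{-6}$, and then claim that multiplying by $w\lesssim r^{-1/2}$ (resp.\ by $r^{-1}$) places $Hw$ (resp.\ $H/r$) in $L'^{2,\delta-2}$. This is false: with $Hw\lesssim r^{-13/2}$ near $0$ the weighted integral behaves like $\int_0 r^{-13}\,r^{-2(\delta-2)-3}\,r^{2}\,dr=\int_0 r^{-10-2\delta}\,dr$, which diverges for every $\delta>-9/2$, hence certainly for $\delta\in(-1,-1/2)$; equivalently, your dyadic series blows up as $k\to-\infty$. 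The missing ingredient is the \emph{lower} barrier. By Lemma~\ref{l:rnsub} one has $\Phi_\mu\ge 1+u^-_0$ uniformly in $\mu$, and since $u^-_0\sim r^{-1/2}$ at the origin this yields the improved bound
\[
H\;\le\; C\,r^{-6}\,(1+u^-_0)^{-8}\;=\;O(r^{-2})\quad\text{as }r\to 0,
\]
which is the paper's \eqref{eq:31}. With this one gets $\sup_{\Rt}|r^{2}H|<\infty$, and hence $\|Hw\|_{L'^{2,\delta-2}}\le(\sup|r^{2}H|)\,\|w\|_{L'^{2,\delta}}$, so the argument closes. In short: the upper barrier $u^+_0$ controls $w$ itself, but you also need the lower barrier $u^-_0$ to tame the nonlinear coefficient $H$; the estimate \eqref{eq:13} alone is not enough.
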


\begin{proof}
The proof is similar as in the previous lemma, the main difference is
that we have to take into account the singular behavior of the
functions at the origin.

We first note that the same argument presented above allows us to prove
convergence in the weighted Lebesgue spaces $L'^{2,\delta}$. In
effect, consider the sequence $u_{\mu}r^{-\delta-3/2}$ for
$-1<\delta<-1/2$ . This sequence is pointwise bounded by
$u^+_{0}r^{-\delta-3/2}$ and monotonically increasing
as the parameter $\mu$ goes to zero, which means that it is
a.e. pointwise converging to a function $u_0r^{-\delta-3/2}$. Then, we
can use the Dominated Convergence Theorem (since
$u^+_{0}r^{-\delta-3/2}$ is summable in $\Rt$ for the given values of
the weight $\delta$) to find that the new sequence converges in
$L^2(\Rt)$. But this implies that the original sequence $u_{\mu}$
converges in $L'^{2,\delta}$, with $\delta\in(-1,-1/2)$. That is
\begin{equation}
\label{limwr3}
\lim_{\mu_1,\mu_2\rightarrow 0}||w||_{L'^{2,\delta}}=0,
\end{equation}
where  $w$ is the difference introduced above in equation (\ref{w}).

In order to prove that the sequence $u_{\mu}$ is a Cauchy
sequence also in the weighted Sobolev space $H'^{2,\delta}$ with
$\delta\in(-1,-1/2)$, we will apply the following estimate (see,
e.g. \cite{Bartnik86})
\begin{equation}\label{estimacion1}
||w||_{H'^{2,\delta}}\leq C||\Delta w||_{L'^{2,\delta-2}},
\end{equation}
where the constant $C$ depends only on $\delta$.

As before,  we obtain
\begin{align}
\label{delta}
  ||\Delta w||_{L'^{2,\delta-2}} & =\left\Vert
    Hw+H\frac{\mu_2-\mu_1}{r}\right\Vert_{L'^{2,\delta-2}} \\
 & \leq
 ||Hw||_{L'^{2,\delta-2}}+(\mu_1-\mu_2)\left\Vert\frac{H}{r}\right\Vert_{L'^{2,\delta-2}}.
\end{align}
From the definition of the norm $L'^{2,\delta}$ given in (\ref{eq:35})
we obtain
\begin{equation}
  \label{eq:36}
   ||Hw||_{L'^{2,\delta-2}}\leq \sup_{\Rt}|Hr^2| \,  ||w||_{L'^{2,\delta}},
\end{equation}
and hence, using (\ref{delta}) we have
\begin{equation}
\label{delta1}
||\Delta w||_{L'^{2,\delta-2}}\leq
C\left(\sup_{\Rt}|Hr^2| \,||w||_{L'^{2,\delta}}+
(\mu_1-\mu_2)\left\Vert\frac{H}{r}\right\Vert_{L'^{2,\delta-2}}\right).
\end{equation}
The crucial step in the proof is to bound, in (\ref{delta1}), the
corresponding norms of $H$ and $H/r$.
At this point is where the weighted Sobolev spaces play a role,
because these norms are not bounded in the standard Sobolev norms.

To bound $Hr^2$ we use
\begin{equation}
\label{eq:31}
H\leq \frac{63J^2\sin^2\theta}{4r^6}7\left(1+u^-_0\right)^{-8}.
\end{equation}
By theorem \ref{l:rnsub} we know that $u^-_0$ goes to zero at infinity,
hence $H$ decays as $r^{-6}$. At the origin, by lemma \ref{l:rnsub},
we know that $u^-_0=\mathcal{O}(r^{-1/2})$, therefore, $H$ grows as
$r^{-2}$. Hence the $r^2H$  is finite for
every value of the parameter $\mu$.

For the other term  we have
\begin{equation}
\left\Vert\frac{H}{r}\right\Vert_{L'^{2,\delta-2}}=
\left(\int_{\Rt\setminus\{0\}}\left|\frac{H}{r}\right|^2r^{-2\delta+1}dx\right)^{1/2}
\end{equation}
and, using again the lower bound as in \eqref{eq:31} we find that
this norm is also finite for $\delta\in(-1,-1/2)$.  Then, we can write
\begin{equation}
||w||_{H'^{2,\delta}}\leq C \left( ||w||_{L'^{2,\delta}}+(\mu_1-\mu_2)\right),
\end{equation}
where the constant $C$ does not depend on $\mu$.  This and equation
(\ref{limwr3}) give us, in the limit $\mu_1,\mu_2\rightarrow 0$
\begin{equation}
\lim_{\mu_1,\mu_2\rightarrow 0}||w||_{H'^{2,\delta}}=0.
\end{equation}
Then,  the sequence $u_{\mu}$ is
Cauchy in the $H'^{2,\delta}$ norm, with
$\delta\in(-1,-1/2)$.
\end{proof}

Note that this theorem also implies that $u_0$ is a strong solution in
the Sobolev spaces $H'^{2,\delta}$ of equation (\ref{ecbye}) also at
the origin.
\section{Final comments}
\label{sec:final-comments}
In this article, we have studied the extreme limit of the Bowen-York
family of initial data. We have found that the extreme solution exists and
has similar properties to the extreme Kerr black hole data.  It is
straightforward to generalize the results presented here for more
general second fundamental forms keeping the conformal flatness of the
data. A more relevant and difficult generalization would involve more
general background metric. In particular, it would be interesting to
generalize the extreme limit for binary Kerr black hole data. A
possible strategy to attack this problem is to prove, using similar
techniques as the ones presented here, that the sequence
of two non-extreme Kerr black holes constructed in \cite{Avila:2008te}
actually converges in the extreme limit.

As it was mentioned in the introduction, there exists a variational
characterization of the extreme limit. The extreme initial data, and
hence data with cylindrical ends, appears naturally as minimum of the
mass in appropriate class of data. The example presented here
incorporates a new class of data in which this variational
characterization holds. As we said in section
\ref{sec:main-result}, we expect that this minimum of the mass
(i.e. the extreme solution) has no horizon. Moreover, we expect that a
small perturbation of an extreme solution (in particular, the extreme
Bowen-York data) will always have an horizon.  It would be interesting
to prove or disprove this conjecture.

\appendix
\section{Explicit expression of the subsolution $u^-_\mu(x)$}

In this section  we construct the explicit solution to equation
\begin{equation}
\Delta u_-(x,\mu)=F.
\end{equation}
where $F$ is given by
\begin{equation}
  \label{eq:20}
  F=\sin^2\theta R(\mu,r),
\end{equation}
and
\begin{equation}
  \label{eq:17}
  R(\mu,r)=-\frac{18J^2}{8r^6\left(1+\frac{m}{r}+
\frac{\mu^2}{4r^2}\right)^{7/2}}.
\end{equation}
The solution is constructed integrating the Green function of the Laplacian, that is
\begin{equation}
  \label{eq:18}
 u_-(x,\mu)=-\int_{\Rt} \frac{F(x')}{|x-x'|}\, dx'.
\end{equation}
We use the expansion of the Green function in terms of spherical
harmonics (see, for example, \cite{jackson99}). The angular dependence
of the source $F$ is given by $\sin^2\theta$, which has an expansion
in terms of the following two spherical harmonics
\begin{equation}
  \label{eq:sphar}
  Y_{00}=\frac{1}{\sqrt{4\pi}}, \quad
  Y_{20}=\sqrt{\frac{5}{16\pi}}(3\cos^2\theta-1),
\end{equation}
namely
\begin{equation}
  \label{eq:19}
  \sin^2\theta= \frac{2}{3}\sqrt{4\pi} \left(Y_{00} - \frac{Y_{20}}{\sqrt{5}}\right).
\end{equation}
Hence, it follows that the angular dependence of the solution can also
be expanded in term of these  two spherical harmonics. That is,
$u^-_\mu$ has the form (\ref{eq:subsolm})
where the radial functions $I_1(r)$ and $I_2(r)$ are given by the
following integrals
\begin{align}
I_1 &=\int_{0}^r R(r',\mu)\frac{1}{r}r'^2dr'+\int_{r}^\infty R(r',\mu)\frac{1}{r'}r'^2dr',\\
I_2 &=\int_{0}^r R(r',\mu)\frac{r'^2}{r^3}r'^2dr'+\int_{r}^\infty R(r',\mu)\frac{r^2}{r'^3}r'^2dr'.
\end{align}
Computing these integrals, we find
\begin{multline}
\label{eq:45}
I_1=\frac{2\sqrt{\pi} J^2}{5rq^6}\left(-8(4\mu^2+3q^2)(2r+\mu)+\right.\\
+\frac{(4\mu^2+3q^2)(16r^4+\mu^4)+4mr(5q^2+8\mu^2)(4r^2+\mu^2)+}{(r^2+mr+\frac{\mu^2}{4})^{3/2}}\\
\left.+\frac{+6r^2(5q^4+16\mu^4+20q^2\mu^2)}{(r^2+mr+\frac{\mu^2}{4})^{3/2}}\right),
\end{multline}
and
\begin{multline}
\label{eq:46}
I_2=\frac{2\sqrt{\pi} J^2}{5r^3q^6}\left(-8(2r+\mu)(16r^4-8r^3\mu+4r^2\mu^2-2r\mu^3+\mu^4)+\right.\\
+\frac{256r^8+\mu^8+6rm(64r^6+\mu^6)
  +96r^6(q^2+2\mu^2)}{(r^2+mr+\frac{\mu^2}{4})^{3/2}}+\\
\left.+\frac{4r^3m(2\mu^2-q^2)(4r^2+\mu^2)+6r^2(2\mu^6+\mu^4q^2+r^2q^4)}{(r^2+mr+\frac{\mu^2}{4})^{3/2}}\right). 
\end{multline}
From these expressions we see that
\begin{equation}
\label{eq:43}
u^-_\mu = \frac{64J^2}{5r(2m+\mu)^3} + O(r^{-2}) \quad r\rightarrow\infty,
\end{equation}
and
\begin{equation}
  \label{eq:44}
  u^-_\mu(r=0) = \frac{4J^2(2m-\mu)^3}{5\mu q^6}.
\end{equation}
When $\mu=0$ the radial  functions (\ref{eq:45})--(\ref{eq:46}) reduce to
\begin{align}
\label{eq:I1mu1}
  I_1|_{\mu=0} &=\frac{4J^2\sqrt{\pi}}{5q^4\sqrt{r}}
\left(\frac{24r^2+40qr+15q^2}{(r+q)^{3/2}}-24\sqrt{r}\right),\\
\label{eq:I2mu2}
  I_2|_{\mu=0} &=\frac{4J^2\sqrt{\pi}}{5q^6\sqrt{r}}
\left(\frac{128r^4+192r^3q+48r^2q^2-8q^3r+3q^4}{(r+q)^{3/2}}-128r^{5/2}\right).
\end{align}
In this case the asymptotic behaviors are given by
\begin{equation}
\label{eq:34}
u^-_0 = \frac{8J^2}{5rq^3} + O(r^{-2})  \quad r\rightarrow\infty,
\end{equation}
and
\begin{equation}
\label{eq:29}
u^-_0 = \frac{9J^2(17-\cos^2\theta)}{25q^{7/2}\sqrt{r}}  + O(1)\quad
r\rightarrow 0.
\end{equation}
Finally, we mention that it is possible to construct a positive  lower bound which is
spherically symmetric and has the correct behavior at the origin and
at infinity. Namely, from (\ref{eq:subsolm}) we deduce
\begin{equation}
\label{cotaumenos}
u^-_\mu\geq Y_{00}\left(I_1-\frac{1}{5}I_2\right)\geq0.
\end{equation}

%\bibliographystyle{/home/dain/biblio/habbrv}
%\bibliography{/home/dain/biblio/biblio}

\end{document}